\documentclass[letterpaper, 10 pt, conference]{IEEEtran}

\IEEEoverridecommandlockouts 

\overrideIEEEmargins

\usepackage{url}
\usepackage{cite}
\usepackage{hyperref}
\usepackage[utf8]{inputenc}
\usepackage[english]{babel}

\usepackage{enumitem}
\usepackage{graphics} % for pdf, bitmapped graphics files
\usepackage{epsfig} % for postscript graphics files
\usepackage{mathptmx} % assumes new font selection scheme installed
\usepackage{times} % assumes new font selection scheme installed
\usepackage{amsmath} % assumes amsmath package installed
\usepackage{amssymb}  % assumes amsmath package installed
\usepackage{mathtools}
\usepackage{lipsum}
\usepackage{cancel}
\newtheorem{theorem}{Theorem}
\newtheorem{defn}{Definition}
\newtheorem{obs}{Observation}

\usepackage{algorithm}
\usepackage{algpseudocode}
\usepackage{bbm}

\newenvironment{properties}
 {\enumerate[label=\textbf{P\arabic*.}, ref=P\arabic*]}
 {\endenumerate}
\makeatletter
\newcommand\varitem[1]{\item[\textbf{P\arabic{enumi}\rlap{$#1$}.}]%
  \edef\@currentlabel{A\arabic{enumi}{$#1$}}}
\makeatother

\title{\LARGE \bf
Pricing in Ride-sharing Markets : Effects of network competition and autonomous vehicles
}

\author{Diptangshu Sen$^a$ and Arnob Ghosh$^b$% <-this % stops a space
%\thanks{*This work was not supported by any organization}% <-this % stops a space
\thanks{$^{a}$ Diptangshu Sen is a graduate student at Georgia Tech School of Industrial \& Systems Engineering.
        {\tt\small dsen30@gatech.edu}}%
\thanks{$^{b}$ Arnob Ghosh is currently affiliated with the NSF-AI Edge Institute at the Ohio State University.  
        {\tt\small ghosh.244@osu.edu}}%
}

\begin{document}

\maketitle
\thispagestyle{empty}
\pagestyle{empty}

\begin{abstract}
Autonomous vehicles will be an integral part of ride-sharing services in the future. This setting differs from traditional ride-sharing marketplaces because of the absence of the supply side (drivers). However, it has far-reaching consequences because in addition to pricing, players now have to make decisions on how to distribute fleets across network locations and re-balance vehicles in order to serve future demand. In this paper, we explore a duopoly setting in the ride-sharing marketplace where the players have fully autonomous fleets. Each ride-service provider (RSP)'s prices depend on the prices and the supply of the other player. We formulate their decision-making problems using a game-theoretic setup where each player seeks to find the optimal prices and supplies at each node while considering the decisions of the other player. This leads to a scenario where the players' optimization problems are coupled and it is challenging to find the equilibrium. We characterize the types of demand functions (e.g.: linear) for which this game admits an exact potential function and can be solved efficiently. For other types of demand functions, we propose an iterative algorithm to compute the equilibrium. We conclude by providing numerical insights into how different kinds of equilibria would play out in the market when the players are asymmetric. Our numerical evaluations also provide insights into how the regulator needs to consider network effects while deciding regulation in order to avoid unfavorable outcomes. 
\end{abstract}

\section{Introduction}
\subsection{Motivation}
When Uber started operations in San Francisco in 2010, a new era of ride-sharing systems was ushered in. Ride-sharing systems are examples of the `sharing economy' where users log onto a platform (like Uber) to request rides and the platform matches them with potential drivers nearby. Since 2010, the ride-sharing market has taken off rapidly and is expected to be worth $456bn ~\$$ by the end of 2023 (\cite{statista}). The traditional ride-sharing marketplace is two-sided and the ride-service provider (RSP) can control both supply and demand sides by using the price signal. However, competition pushes passenger fares down and driver wages up and reduces the margin for the RSPs.

With significant technological advances over the last few years, autonomous vehicles are the next big step in the ride-sharing marketplace. Lyft has already introduced autonomous fleets in Las Vegas, Miami and more recently in Austin, Texas (\cite{lyft1}, \cite{lyft2}). Uber also has similar plans by the end of 2022 (\cite{uber1}). While such futuristic developments are exciting, they alter the dynamic of the marketplace entirely. The supply side of the marketplace is now non-existent and platforms are required to maintain their own autonomous fleets. Hence, it is profitable for the RSPs to have a such fleet as they do not need to incentivize drivers.  A fixed supply also means that apart from pricing, there are other important considerations to be made, like what should be the optimal fleet size and how to dispatch vehicles optimally. Thus, the decision-making process becomes more convoluted.  In this paper, our goal is to study the following : 
\begin{enumerate}
    \item How can players with autonomous vehicles make decisions about how to price, dispatch, and re-balance effectively? 
    \item How will the above decisions be affected when there is competition in the marketplace? 
    \item What kind of effects would imposing regulations (like parking costs, and congestion taxes) have on the RSPs?
\end{enumerate}
The readers may be wondering why these are questions worth answering. Although the first two questions have been studied extensively for the traditional ride-sharing marketplace, analysis of the case where autonomous vehicles are involved is in its nascent stage. We have already explained how the case with autonomous vehicles is different from the traditional case. Also, there are examples in recent times where replacing human components of systems with autonomous components, can lead to unfavorable consequences (\cite{npr}). So it is imperative that we study these hybrid systems exhaustively before deploying them in the real world. Further, we also need to understand how to regulate those marketplaces to increase consumers' welfare.  Now, we highlight why some of these problems are challenging. \cite{AG2022} has shown that under restrictive assumptions (linearity) on the demand function, there is an easy way to compute the equilibrium of the problem using potential functions. However, the problem becomes challenging when we relax the linearity assumption (Refer Section \ref{sec:equilibrium}). It is also not clear apriori how these hybrid systems will respond to player asymmetries and regulations. We investigate all these aspects of the problem in the paper. 
We answer the questions using a \textit{duopoly setting} where only two platforms are involved in the marketplace. While a duopoly setting might sound restrictive, however we observe that  many major ridesharing markets around the world have evolved into duopolies like Uber-Lyft in the US (\cite{duopoly1}), Uber-Ola in India (\cite{duopoly2}). Further, \cite{AG2022} also considers a duopoly setting. 

\subsection{Contributions}
We summarize the main contributions of our paper below. 
\begin{enumerate}[leftmargin=*]
    \item We consider a generic demand model (linear or non-linear in prices) to capture the dynamics of a ride-sharing marketplace. To the best of our knowledge, the existing literature considers linear demand function which is indeed a restrictive assumption because, in real life, demand functions are estimated using data and are hardly ever linear. 
    \item Our modeling choice makes the problem more challenging because it no longer admits a potential function (which is one of the key solution concepts in network games). So we develop an iterative algorithm to compute the equilibrium which applies to any general demand function. Even though we cannot provide formal convergence guarantees on the algorithm at this time, empirically our algorithm converges to the equilibrium quickly. 
    \item We investigate the properties of the equilibrium in a variety of settings. We show through numerical experiments that when the players in the market are asymmetric and demands are unbalanced (unbalanced demand means that different nodes in the network have significantly different levels of demand), the smaller player may be forced to exit the market meaning that they will not serve certain origin-destination pairs leading to `localized monopolies'.
    
    \item We also show that network effects play significant roles in  selecting regulations. This helps us to generate useful insights on how regulators can design price regulations that can achieve desired outcomes like increasing passenger welfare and decreasing idle vehicles. 
\end{enumerate}
\subsection{Literature Review}
In recent years, the ride-hailing marketplace has been an area of active research. The pricing problem is a consistent theme in the existing literature and two notable papers which explore this are \cite{SB2015} and \cite{KB2019}. \cite{SB2015} uses a queuing theory framework for matching passengers with rides. %They show that while static pricing policies provide better profits for the platform compared to any state-dependent dynamic pricing policies, dynamic policies are in fact, much more robust to noisy parameter data (like passenger arrival rates). 
In contrast to studying the temporal variation, \cite{KB2019} considers the effect of spatial variation. There have also been other work which has explored pricing for ride-sharing platforms using tools like reinforcement learning (\cite{VA2021}). However, this line of work is with respect to a single platform (which operates in a traditional two-sided marketplace) and does not consider the competition. Hence, those analyses can not be extended to our setting.

The network effect on a single platform or market maker has been studied \cite{cai2019role,xu2017efficiency,pang2017efficiency,lin2017networked}. In \cite{cai2019role,xu2017efficiency} the market maker or platform procures supply across multiple locations and then transports those supplies to meet the demand. Multiple firms compete at each node as in the Cournot competition model. In \cite{pang2017efficiency,lin2017networked} the platform creates a network by assigning edges between the supply and the demand side where suppliers/firms can only serve the customers connected through edges. However, in our setup, multiple platforms (RSPs) are competing across multiple locations instead of a single platform. Further, the competition model we consider is different from the networked Cournot model considered in the above papers. 

%More recently, researchers have looked at the scenario where autonomous vehicles are involved. As explained earlier, there are three primary decision problems for the platform : 1) how to pick the optimal fleet size to be maintained 2) how to price passengers and finally, 3) how to dispatch and re-balance empty vehicles to meet demand. There is literature where researchers try to tackle one or more of these problems together. \cite{CC2020} proposes how to solve the pricing and the re-balancing problem efficiently while \cite{BV2019} looks at the empty vehicle re-balancing problem specifically. Finally, \cite{CC2021} develops a unified framework to solve all three decision problems at the same time, albeit in steady state. Note that all these aforementioned works do not consider competition with other platforms. 

More recently, researchers have considered how to optimally re-balance (sending idle vehicles from one location to serve the demand at other locations) \cite{BV2019}, and selecting prices \cite{CC2020} when the RSP has autonomous vehicles. \cite{CC2021} considers the setup where the RSP selects optimal prices and rebalances the vehicles jointly. However, these papers do not consider the effect of competition of multiple RSPs. There has also been some work that considers competition among  ride-sharing platforms. \cite{NZ2017} explores how competition on the supply side of the marketplace can affect driver wages and passenger welfare. However, as we pointed out earlier, the competition with autonomous vehicles is a different scenario.  

In terms of context, the works which are closest to ours are \cite{AZ2021} and \cite{AG2022}. \cite{AZ2021} looks at price competition in a duopoly setting with two platforms that own fully autonomous fleets. However, it assumes that the platforms have identical operation costs which leads to a symmetric equilibrium (i.e., identical prices for both players for each source-destination pair). However, the firms may differ significantly in terms of fleet sizes and this can lead to asymmetric equilibria which we have considered in our paper, but has not been taken into account in \cite{AZ2021}. This allows us to get insights into  the properties of asymmetric equilibria. As we mentioned earlier, \cite{AG2022} also investigates the competition between two ride-sharing platforms with autonomous fleets.  However, \cite{AG2022} considers a linear model. Compared to both \cite{AZ2021,AG2022} we consider generic demand functions. We also provide an iterative algorithm on how to find equilibrium for these generic demand functions. For a special non-linear demand model, we observe that such an algorithm indeed converges quickly. Further, unlike the above papers, we also investigate the impact of various forms of regulation which guides us on how to attain outcomes that will be beneficial to the passengers.

\section{Modeling}
\subsection{Network}
We consider a simple two-node network which is represented by the complete directed graph $G = (\mathcal{N}, \mathcal{A})$ (Refer Fig. \ref{fig1:graph}). Clearly, $\mathcal{N} = \{1, 2\}$ and $\mathcal{A} = \{e_{11}, e_{12}, e_{21}, e_{22}\}$. We also assume that transit times along all arcs are the same. Note that our analysis and insights go through for larger networks (with $|\mathcal{N}| > 2$) and different arc transit times. The primary reason for considering a small version of the network is to visualize and interpret the network effects on the decisions more meaningfully.  
\begin{figure}
      \centering
      \includegraphics[width=0.45\textwidth]{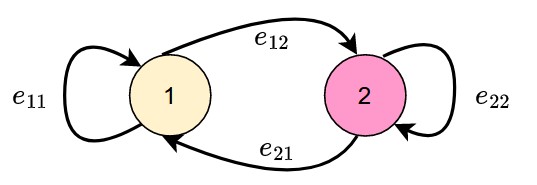}
      \caption{Network Structure}
      \label{fig1:graph}
      \vspace{-0.1in}
\end{figure}

\subsection{Players \& Interactions Model}
There are RSPs $A$ and $B$ who operate in the marketplace (i.e., a duopoly setting). Each RSP has a fixed number of vehicles in its fleet, however, the number may be different across the RSPs (asymmetric players). Vehicles are used in one of three ways: i) they serve passengers, ii) they are routed empty (which we denote as {\em re-balancing} throughout this paper) to places of high expected demand, and iii) they stand idle at one or multiple locations (equivalent to be `parked'). We assume that a vehicle can only serve one passenger in a trip. Each RSP earns revenue by serving demand. They also incur costs for operating the fleet (fuel costs/ `congestion taxes') or for keeping vehicles idle (`parking costs'). Therefore, the RSP has to decide how to set prices and use vehicles judiciously so that they can maximize their profits. Market competition makes the pricing decision more convoluted because players \footnote{We will use the terms ride-service providers, players, and RSPs interchangeably.} now have to take into consideration prices set by the other player. We will capture the effect of one player's prices on the other using the \textit{demand function} which we explain subsequently. The players are assumed to be rational and selfish, so we model their interaction as a \textit{simultaneous}, \textit{non-cooperative game}. Note that our model does not consider any temporal aspects as our focus is on investigating the spatial effect. The characterization of the model for the temporal variation of demand has been left for the future. 

\subsection{Demand Function Modeling}
The demand function outputs the fraction of the market share acquired by an RSP, given its own price and the price of the competitor. Let $f(p_A, p_B)$ represent the demand function for player $A$ which depends on the prices of both players $A$ and $B$, given by $p_A$ and $p_B$ respectively (similarly, player $B$'s demand function would be given by $f(p_B, p_A)$). We assume that all prices are normalized by $P$ and hence are in the interval $[0, 1]$. $P$ can be thought of as the price of an alternative commuting option, so no price in the network should exceed $P$, otherwise, passengers will avail of the outside option. In order to qualify as a candidate demand function, it is necessary that $f(\cdot)$ have the following desirable properties :
\begin{properties}
    \item \label{item:P1} $0 \leq f(p_A, p_B) \leq 1$ : It is not possible to capture a market share that is negative or greater than $1$.
    \item \label{item:P2} $f(p_A, p_B) + f(p_B, p_A) \leq 1$ :  The total market share captured by both players cannot exceed $1$. 
    \item \label{item:P3} $p_A = p_B \implies f(p_A, p_B) = f(p_B, p_A)$ : This means that if both players set equal prices, they should capture identical market shares. It also reflects an inherent assumption in our model that no passenger has specific preferences for a particular ride-service provider. Also, note $f(0, 0) = \frac{1}{2}$. 
    \item \label{item:P4} $f(p, p)$ is non-increasing in $p$ : This captures the price sensitivity of passengers. Even if both players  set the same price $p$, as $p$ increases, fewer and fewer people would be availing of a ride as it exceeds their willingness-to-pay threshold. 
    \item \label{item:P5} $p_A > p_B \implies f(p_A, p_B) \leq f(p_B, p_A)$ : If player $A$ sets a higher price than $B$, then $A$ cannot capture a strictly larger market share than $B$. 
    \item \label{item:P6} $p_A > p_A^{\prime} \implies f(p_A, p_B) \leq f(p_A^{\prime}, p_B)$ : This means that with player $B$'s price fixed, if player $A$ increases its price, then its market share will reduce. If $f(\cdot)$ is differentiable with respect to $p_A$, this condition is equivalent to $\frac{\partial f}{\partial p_A} \leq 0$
    \item \label{item:P7} $p_B < p_B^{\prime} \implies f(p_A, p_B) \geq f(p_A, p_B^{\prime})$ : If the competitor $B$ increases its price, then player $A$'s market share should increase for the same price $p_A$. If $f(\cdot)$ is differentiable with respect to $p_B$, it is equivalent to $\frac{\partial f}{\partial p_B} \geq 0$ 
    \item \label{item:P8} $f(1, p_B) = 0$ : If player $A$ sets $p_A = 1$ (the highest possible price), then its market share goes to \textit{zero}, irrespective of $p_B$. Alternatively, $p_A = 1$ represents a scenario where player $A$ does not compete in the market and $B$ has a monopoly. 
    \item \label{item:P9} $f(0, 1) = 1$ : If player $A$ has a monopoly, then it can capture the whole market by setting $p_A = 0$. This is again intuitive. 
\end{properties}
In recent ride-sharing literature, \cite{AG2022} uses a specific piecewise linear form. It can be easily verified that it satisfies all the aforementioned properties. However, linearity is a very strong assumption because, in reality, demand functions are hardly ever linear. Thus, our goal is to analyze {\em any} demand model which captures all the above properties. As an example of a non-linear model for which we will evaluate our numerical results, we consider the following bi-linear form: 
\begin{equation}\label{eq:demandfunction}
    f(p_A, p_B) = \frac{1}{2}(1-p_A)(1+p_B) \quad \forall ~0 \leq p_A, p_B \leq 1
\end{equation}

\subsection{RSP's Optimization Problem}
Each RSP seeks to maximize its profit. Profit for an RSP is expressed as the difference between total revenues earned and operation costs incurred. Revenues are earned from serving demand in the network. Operation costs are classified into two components :
\begin{enumerate}[leftmargin=*]
    \item \textit{Re-balancing costs} : The RSP has the option to route empty vehicles to other locations where demand is higher. Thus, re-balancing costs may represent fuel costs for the empty cars plying on the network. However, sometimes, re-balancing vehicles can also cause a nuisance by contributing to higher congestion. To prevent such behavior, the central planner may impose penalties on empty routing vehicles. All such penalties are also included in this cost. 
    \item \textit{Parking costs} : If demand is low, the RSP will be forced to keep some of its fleets idle. Sometimes, the RSP might also have the incentive to keep vehicles idling deliberately to create an artificial lack of supply in the market and jack up the price. If the RSP chooses to keep vehicles parked at any node, it has to pay parking costs. For example, during certain periods of the day, the central planner may choose to impose high parking fees at specific locations to specifically avoid idling behavior.  
\end{enumerate}
Each ride-service provider has to make the following decisions : 
\begin{enumerate}[leftmargin=*]
    \item {\em How to choose ride fares $p_{(\cdot)}^{ij}$ for every arc in the network ?} The RSP wants to earn higher revenues, so it may be tempted to set high prices but there is a trade-off because high prices mean that a lower number of passengers would be willing to ride and there is the risk of losing out of market share to the competitor. Pricing is also important for matching supply with demand. Since the supply is limited, pricing too low would mean that demand exceeds supply and many passengers who want to avail of a ride, do not get matched with a vehicle.  
    \item {\em Given a fixed fleet ($m_{\cdot}$) size, how to allocate supplies $m_{(\cdot)}^{i}$ optimally across all the network nodes?} This decision depends on a lot of factors. Intuitively, it is more profitable to allocate larger supplies to nodes with high expected demand. However, it might happen that competition at the other node is low. In case some part of the fleet is not in use and will be kept idle (low demand regime), it may be more favorable for the RSP to place those vehicles at a location that has lower parking costs. 
    \item {\em How to choose re-balancing flows of empty vehicles (given by $r_{(\cdot)}^{ij}$) throughout the network effectively?} Again, it is not apriori straightforward to optimally decide. When rides carry passengers from one location to another, it leads to the accumulation of vehicles at the destination node. If demand along the reverse direction is scarce, the RSP has to re-route some of these empty vehicles to meet demand in other locations because supply is limited. 
\end{enumerate}

\subsubsection{Assumptions}
We make the following assumption :
 We treat vehicles like divisible commodities. This enables us to model them as continuous variables and retain the convexity of the solution space (ideally, they should be modeled as integer variables because the number of passengers or vehicle supply cannot be fractional). It is well-known that mixed-integer models are difficult to solve and solution methods often do not scale well.

\subsubsection{Notation Key}
In this segment, we introduce the notation for our formulation. Unless otherwise specified, this notation applies to the rest of the paper. (Refer Table \ref{tab:notation})
\begin{table}[h!tb]
    \centering
    \begin{tabular}{|c|l|}
    \hline
        \textbf{Notation} & \textbf{Description} \\ \hline
       $m_A, m_B$ & Total fleet sizes of players $A$ and $B$ respectively \\ \hline
       $m_A^{i}, m_B^{i}$ & supply of vehicles at node $i \in \mathcal{N}$\\ \hline
       $p_c^{ij}$ & transit cost per trip along arc $e_{ij} \in \mathcal{A}$ \\ \hline 
       $p_e^{i}$ & parking cost per vehicle at a node $i \in \mathcal{N}$ \\ \hline
       $p_A^{ij}, p_B^{ij}$ & ride fare along arc $e_{ij} \in \mathcal{A}$ \\ \hline
       $x_A^{ij}, x_B^{ij}$ & rate of serving demand along $e_{ij} \in \mathcal{A}$ \\ \hline
       $r_A^{ij}, r_B^{ij}$ & rate of re-balancing along $e_{ij} \in \mathcal{A}$ \\ \hline
       $D^{ij}$ & rate of demand generation along $e_{ij} \in \mathcal{A}$ \\ \hline
    \end{tabular}
    \vspace{0.1in}
    \caption{Notation Summary}
    \label{tab:notation}
    \vspace{-0.4in}
\end{table}

% \begin{itemize}
%     \item $m_A, m_B$ - Total fleet sizes of players $A$ and $B$ respectively
%     \item $m_A^{i}, m_B^{i}$ - steady state supply of vehicles at node $i \in \mathcal{N}$ for players $A$ and $B$
%     \item $p_c$ - transit cost per trip along an arc of the network. Note that $p_c$ can be arc-dependent (i.e. $p_c^{ij} ~\forall~ e_{ij} \in \mathcal{A}$)
%     \item $p_e$ - parking cost per vehicle at a node. $p_e$ can be node-dependent (i.e., $p_e^{i}~\forall~ i \in \mathcal{N}$)
%     \item $p_A^{ij}, p_B^{ij}$ - ride fare along arc $e_{ij} \in \mathcal{A}$ for players $A$ and $B$ 
%     \item $x_A^{ij}, x_B^{ij}$ - steady state rate at which players $A$ and $B$ serve demand along $e_{ij} \in \mathcal{A}$
%     \item $r_A^{ij}, r_B^{ij}$ - steady state rate of re-balancing along $e_{ij} \in \mathcal{A}$ for players $A$ and $B$
%     \item $D^{ij}$ - steady state rate of demand generation along $e_{ij} \in \mathcal{A}$
% \end{itemize}

\subsubsection{Formulation}
We now introduce our formulation for player $A$'s optimization problem (given by $\mathcal{F_A}$). Player $B$'s optimization problem will be similar, so we will omit it here. 
\begin{subequations}\label{eq:opt_problem}
    \begin{align}
        \mathcal{F_A} := \max_{p_A^{ij}, r_A^{ij}, m_A^{i}}& ~ \sum_{e_{ij} \in A}(p_A^{ij}-p_c^{ij})x_A^{ij} - \sum_{e_{ij} \in A}p_c^{ij} r_A^{ij} \notag \\
         - &\sum_{i \in N}p_e^{i}\left(m_A^{i} - \sum_{j \in N}(x_A^{ij}+r_A^{ij})\right) \notag \\
        s.t. \quad &x_A^{ij} = D^{ij}\cdot f(p_A^{ij}, p_B^{ij}) \quad \forall~e_{ij} \in \mathcal{A} \label{con:demand} \\
        &\sum_{j \in N}\left(x_A^{ij}+r_A^{ij}\right) \leq m_A^{i} \quad \forall ~i \in \mathcal{N} \label{con:supply}\\
        &\sum_{\substack{j \in N\\j \neq i}}\left(x_A^{ij}+r_A^{ij}\right) = \sum_{\substack{j \in N\\j \neq i}}\left(x_A^{ji}+r_A^{ji}\right) \quad \forall ~i \in \mathcal{N} \label{con:flow}\\
        &\sum_{i \in N} m_A^{i} = m_A \label{con:fleet}\\
        &0 \leq p_A^{ij} \leq 1 \quad \forall ~e_{ij} \in \mathcal{A} \label{con:bound}\\
        &0 \leq r_A^{ij} \quad \forall ~e_{ij} \in \mathcal{A}, \quad 0 \leq m_A^{i} \quad \forall ~i \in \mathcal{N} \label{con:nonneg}
    \end{align}
\end{subequations}
As described earlier, player $A$ seeks to maximize its profit (Revenue - rebalancing costs - parking costs). (\ref{con:demand}) is the demand constraint (note that we can omit this constraint by substituting $x_A^{ij}$'s throughout, however, it makes the problem non-linear in $p_A^{ij}$). (\ref{con:supply}) is a supply constraint at every node because the total outflow rate of vehicles from node $i \in \mathcal{N}$ cannot exceed the supply rate of vehicles at the node. The equality constraint in (\ref{con:flow}) represents a flow balance constraint at each node. This is needed as the total inflow rate must equal to the total outflow rate from any given node. Observe that we deliberately omit flows along the self-loops (edges $e_{ii}$) because it cancels out on both sides of the equation. The next constraint (\ref{con:fleet}) implies that the total supply across the network must add up to the fleet size of the RSP. Finally, we have the bounds on the prices (\ref{con:bound}) and the non-negativity constraints (\ref{con:nonneg}). Since prices are normalized, they cannot exceed $1$. 
\subsubsection{Essential Insights}
Even before we proceed to finding the equilibrium of the game, several interesting observations can be made about player $A$'s optimization problem $\mathcal{F_A}$. 
\begin{obs}\label{obs:unique}
When player $B$'s prices are known and we consider the form of the demand function in Eq.\ref{eq:demandfunction}, player $A$'s optimization problem admits a unique solution (and vice-versa).
\end{obs}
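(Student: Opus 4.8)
The plan is to fix $p_B$, eliminate the auxiliary variables $x_A^{ij}$ using the demand constraint, and show that what remains is the maximization of a concave objective over a compact convex polytope that is moreover strictly concave in the directions that matter, so that the optimizer is forced to be unique.

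First I would use (\ref{con:demand}): with $p_B^{ij}$ fixed, $f(p_A^{ij},p_B^{ij})=\tfrac12(1-p_A^{ij})(1+p_B^{ij})$ is \emph{affine} in $p_A^{ij}$, so $x_A^{ij}=D^{ij}\tfrac12(1+p_B^{ij})(1-p_A^{ij})$ is affine in $p_A^{ij}$ too. Substituting into $\mathcal{F_A}$, the revenue contribution $(p_A^{ij}-p_c^{ij})x_A^{ij}$ of arc $e_{ij}$ becomes a quadratic in the single variable $p_A^{ij}$ with leading coefficient $-\tfrac12 D^{ij}(1+p_B^{ij})$, which is strictly negative whenever $D^{ij}>0$; the parking term contributes only the affine piece $+p_e^i x_A^{ij}$ in $p_A^{ij}$; and the re-balancing and fleet terms are affine in $(r_A^{ij},m_A^i)$. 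Hence the reduced objective is a sum of per-arc strictly concave quadratics in the prices plus a linear form in all variables, so it is concave and \emph{strictly} concave along any direction in which some $p_A^{ij}$ changes. The feasible set is a nonempty (take $p_A^{ij}=1$, $r_A^{ij}=0$, and any admissible split of $m_A$) compact convex polytope, so a maximizer exists, and if two maximizers had different price vectors their midpoint would be feasible and strictly better, a contradiction; thus the optimal prices $p_A^\star$, and hence $x_A^\star$, are unique.

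It remains to pin down $(r_A^{ij},m_A^i)$. With $x_A^\star$ fixed the residual problem is a linear program over the polytope defined by (\ref{con:supply})--(\ref{con:nonneg}), and here I would use the two-node structure. The flow-balance constraints (\ref{con:flow}) collapse to the single equation $x_A^{12\star}+r_A^{12}=x_A^{21\star}+r_A^{21}$, so the re-balancing vector is determined up to adding a common ``cycle'' amount to $r_A^{12}$ and $r_A^{21}$; the re-balancing and parking terms are strictly monotone in that amount for economically natural cost data (transit costs exceeding the relevant parking cost), which forces re-balancing in a single direction and $r_A^{ii}=0$. Once the outflow from each node is fixed, the fleet constraint (\ref{con:fleet}) and the term $-\sum_i p_e^i m_A^i$ force every slack vehicle to the node with the smaller parking cost, pinning down $(m_A^1,m_A^2)$. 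Exchanging the roles of $A$ and $B$ gives the ``vice-versa'' claim.

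The step I expect to be the real obstacle is this last one: the residual program is only weakly concave, so its uniqueness is not automatic and in fact fails for degenerate cost data --- e.g.\ $p_e^1=p_e^2$ together with an excess fleet leaves the idle-vehicle split undetermined. I would therefore either state the mild non-degeneracy hypotheses explicitly (strictly positive demands, strictly positive transit costs, distinct parking costs) or, since the graph has only two nodes, simply enumerate the vertices of the small residual polytope and verify that exactly one is optimal under those hypotheses; everything upstream --- the substitution, the concavity, and the uniqueness of $(p_A^\star,x_A^\star)$ --- goes through unconditionally and is the substance the observation rests on.
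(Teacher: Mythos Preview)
Your approach --- substitute out $x_A^{ij}$ via (\ref{con:demand}), recognize the reduced objective as a concave quadratic in each $p_A^{ij}$ with strictly negative leading coefficient, and maximize over a compact convex polytope --- is exactly the paper's. The paper's proof, however, stops at that point: it simply asserts that ``maximizing a concave function over a convex compact set always leads us to a unique solution,'' which is of course false without \emph{strict} concavity, and declares the observation proved.

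Your proposal is therefore strictly more careful than the paper's. You correctly isolate the real content --- strict concavity in the price directions forces $p_A^\star$, hence $x_A^\star$, to be unique --- and you correctly flag that the residual linear program in $(r_A^{ij},m_A^i)$ need not have a unique optimizer without side assumptions: equal parking costs together with a surplus fleet is the obvious counterexample, and $D^{ij}=0$ on some arc leaves that arc's price free as well. The paper does not address either issue; it simply over-claims. Your instinct to impose mild non-degeneracy (strictly positive $D^{ij}$, distinct $p_e^i$, transit costs dominating parking so that gratuitous cycling is strictly penalized) or, failing that, to exploit the two-node structure and argue directly on the small residual polytope, is the right way to make the statement honest; the paper offers nothing comparable.
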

\begin{proof}
This is easy to verify. When we use the demand function form in Eq. \ref{eq:demandfunction}, the objective function is concave in $p_A^{ij}$ (quadratic in $p_A^{ij}$ with negative leading co-efficient), $r_A^{ij}$ and $m_A^{i}$. Also, the feasible set is convex and compact. Maximizing a concave function over a convex compact set always leads us to a unique solution. 
\end{proof}
Note that Observation \ref{obs:unique} will be used later to ensure the uniqueness of the equilibrium in Section \ref{sec:alg} for any given price vector $p_B$ and the demand function form in Eq. \ref{eq:demandfunction}.

Note that if $f(\cdot)$ is non-linear in $p_A$, the constraint in (\ref{con:demand}) makes the problem non-convex for a given price in $p_B$, hence, we can not guarantee the above observation. However, we can still use any non-linear solver to find $p_A$ for a given price of the other player $B$.

The next observation tells us that even if the lower bound of the price is $0$, under a competitive setup, no player would choose a price smaller than $1/2$.
\begin{obs}   \label{obs:pricebound} 
All prices on the network will be $\geq \frac{1}{2}$.
\end{obs}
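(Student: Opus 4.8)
The plan is to prove the statement by an improving-deviation argument, establishing something slightly stronger: in \emph{any} best response of a player (say $A$) to an arbitrary competitor price profile $(p_B^{ij})$, every arc price satisfies $p_A^{ij}\ge\tfrac12$; the equilibrium claim is then immediate. Fix such a best response and suppose, for contradiction, that $p_A^{ij}<\tfrac12$ on some arc $e_{ij}$ with $D^{ij}>0$ (arcs with $D^{ij}=0$ carry no demand and their prices are vacuous). Raise the price on that single arc to $\tfrac12$. By \ref{item:P6} the served rate drops by $\Delta:=D^{ij}\big(f(p_A^{ij},p_B^{ij})-f(\tfrac12,p_B^{ij})\big)\ge 0$. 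The key bookkeeping step is to absorb this drop by simultaneously increasing the re-balancing flow $r_A^{ij}$ on the \emph{same} arc by $\Delta$: then the out-flow $\sum_{k}(x_A^{ik}+r_A^{ik})$ at $i$, the in-flow $\sum_{k}(x_A^{kj}+r_A^{kj})$ at $j$, the slack of the supply constraint \eqref{con:supply} at $i$, and the fleet total \eqref{con:fleet} are all unchanged, so the modified solution is feasible and differs from the original only in the terms attached to arc $e_{ij}$.

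Collecting those terms — revenue $(p_A^{ij}-p_c^{ij})x_A^{ij}$, re-balancing cost $-p_c^{ij}r_A^{ij}$, and the parking-cost saving $+p_e^{i}(x_A^{ij}+r_A^{ij})$ coming from \eqref{con:demand} substituted into the objective — a one-line computation shows the profit changes by exactly $(\tfrac12-p_A^{ij})\,x_A^{ij}-\tfrac12\Delta$; note that the transit and parking costs cancel out. For the demand model \eqref{eq:demandfunction} one has $x_A^{ij}=D^{ij}\tfrac{1+p_B^{ij}}{2}(1-p_A^{ij})$ and $\Delta=D^{ij}\tfrac{1+p_B^{ij}}{2}(\tfrac12-p_A^{ij})$, so the change equals $D^{ij}\tfrac{1+p_B^{ij}}{2}\big(\tfrac12-p_A^{ij}\big)^2>0$ whenever $p_A^{ij}<\tfrac12$. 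This contradicts optimality, which proves the observation; applying it to each player at equilibrium yields the stated conclusion for all prices on the network.

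Two points need a little care. First, the self-loop arcs $e_{ii}$ do not appear in the flow-balance constraint \eqref{con:flow}, so the ``reroute into $r$'' trick degenerates; there one instead lets the $\Delta$ freed vehicles go idle, and the profit change becomes $(\tfrac12-p_A^{ii})x_A^{ii}-(\tfrac12-p_c^{ii}+p_e^{i})\Delta$, which is still positive for model \eqref{eq:demandfunction} provided $p_e^{i}$ is not too large (and trivially if, as is natural, $D^{ii}=0$). Second — and this is the genuine subtlety rather than a routine one — the precise threshold $\tfrac12$ is a feature of demand functions whose induced single-variable revenue $p\mapsto p\,f(p,p_B)$ peaks at $\tfrac12$; this holds for \eqref{eq:demandfunction} and the peak is $\ge\tfrac12$ for the piecewise-linear model of \cite{AG2022}, but for a general $f$ obeying only \ref{item:P1}--\ref{item:P9} the same deviation argument only delivers prices bounded below by that revenue-maximizing single-variable price, which need not reach $\tfrac12$. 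I would therefore state and prove the observation for the demand model \eqref{eq:demandfunction}, consistently with Observation \ref{obs:unique}, and remark separately on the general lower bound. The main obstacle is thus not the algebra but verifying that the deviation remains feasible for every constraint and pinning down exactly which demand families give the clean $\tfrac12$ bound.
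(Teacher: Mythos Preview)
Your argument is correct for the bi-linear demand model \eqref{eq:demandfunction}, but it follows a genuinely different route from the paper. The paper proceeds via the KKT conditions: writing the Lagrangian and combining the stationarity conditions in $p_A^{ij}$ and $r_A^{ij}$ yields the closed form $p_A^{ij}=\tfrac12(1+Q^{ij})$, where $Q^{ij}\ge 0$ is the multiplier on $r_A^{ij}\ge 0$; the bound follows immediately and, as a bonus, one reads off that equality holds precisely when $r_A^{ij}>0$. Your improving-deviation argument is more elementary---no Lagrangian machinery, just feasibility bookkeeping and a one-line quadratic---and it makes transparent that the threshold $\tfrac12$ is a property of the particular demand family, a point the paper's proof uses implicitly (the KKT derivation is carried out for \eqref{eq:demandfunction}) but does not flag. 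Each approach has its merit: KKT gives the exact price formula and the equality characterization for free; yours isolates exactly which structural feature of $f$ drives the bound.

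One minor correction: your caveat about self-loops is unnecessary. The variable $r_A^{ii}$ exists on the self-loop arc $e_{ii}\in\mathcal{A}$, appears in the supply constraint \eqref{con:supply} and in the parking term of the objective, and is simply absent from \eqref{con:flow}; hence the same ``replace $\Delta$ units of $x_A^{ii}$ by $\Delta$ units of $r_A^{ii}$'' deviation is feasible and the identical profit computation goes through, with no restriction on $p_e^{i}$. Indeed, the paper's appendix carries out the KKT derivation precisely for the self-loop arc $e_{11}$.
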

\begin{proof}
Observe that $p_A^{ij} \geq p_c^{ij}$ always. This is because setting a price lower than $p_c^{ij}$ would lead to negative revenues from serving demand on $e_{ij}$ and is not favorable to the RSP. $p_A^{ij} = 1$ is equivalent to player $A$ not competing on $e_{ij}$ at all, so this case is not very interesting. (Also, when $p_A^{ij} = 1$, the observation is trivially correct.) So, we now look at cases when $p_c^{ij} \leq p_A^{ij} < 1$. Using the KKT-conditions on Problem \ref{eq:opt_problem}, we can show that :
    \begin{equation*}
        p_A^{ij} = \frac{1}{2}(1+Q^{ij})
    \end{equation*}
where $Q^{ij}$ is the Lagrange multiplier associated with the non-negativity constraint $r_A^{ij} \geq 0$. Since $Q^{ij} \geq 0$, this implies $p_A^{ij} \geq \frac{1}{2}$. Equality holds when $Q^{ij} = 0$, that is whenever $r_A^{ij} > 0$ (using complementary slackness). For a detailed analysis of the KKT-conditions, please refer to Appendix section of the paper.
\end{proof}

\textit{Remarks : } Although our main focus in this paper is to study competition in the duopoly setting, note that our model can also be used to simulate monopoly scenarios in the ride-sharing marketplace. This can be done by simply setting the price vector of one of the players equal to $\mathbbm{1}$. By property \ref{item:P8} of the demand function, it ensures that the market share of this player goes to \textit{zero}, irrespective of the price point of the other player. This is very convenient because it allows us to compare the properties of the monopoly market with the competitive market.

\section{Computing Equilibria}\label{sec:equilibrium}
From player $A$'s optimization problem in Eq. \ref{eq:opt_problem}, it is clear that $A$ cannot solve for the optimal decisions without considering the prices set by player $B$. Similarly, player $B$'s decisions are also dependent on prices set by $A$. Thus, their optimization problems are coupled. In this setting, we define an \textit{equilibrium} as the combination of decisions $\left(\{p_A^{ij}\}, \{r_A^{ij}\}, \{m_A^{i}\}, \{p_B^{ij}\}, \{r_B^{ij}\}, \{m_B^{i}\}\right)$ from where neither player $A$ nor player $B$ has the incentive to deviate unilaterally. We assume a complete information setup where both players have knowledge of the other player's fleet sizes. The characterization of equilibria for the incomplete information setup is an interesting future direction and is out of the scope of this paper. 

\subsection{Potential functions}
Nash equilibrium in general is difficult to obtain computationally even if a game admits equilibrium. Potential functions are extremely useful tools to compute the equilibria of multi-player games. They involve finding a global potential function $\Phi(\cdot)$ which tracks the change in payoffs whenever any player unilaterally changes their strategy. It has been established that the local optima of the potential function correspond to the pure Nash equilibria of the underlying game. \cite{AG2022} has already shown that there exists an exact potential game corresponding to a 2-player duopoly setting in the ride-sharing marketplace with fully autonomous vehicles, for a specific choice of the demand function. Our aim is to investigate if and when the potential game approach can be applied more generally.
\begin{obs}
For the bi-linear demand function defined in Eq. \ref{eq:demandfunction}, this game does not have an exact potential function.
\end{obs}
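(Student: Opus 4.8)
The plan is to appeal to the Monderer--Shapley second-order test for exact potential games: if the two-player game with (smooth) payoffs $u_A,u_B$ admitted an exact potential $\Phi$, then for every coordinate $s_A$ of player $A$'s decision vector and every coordinate $s_B$ of player $B$'s one would need
\[
\frac{\partial^2 u_A}{\partial s_A\,\partial s_B}=\frac{\partial^2 u_B}{\partial s_A\,\partial s_B}.
\]
So it is enough to exhibit one pair of coordinates for which this fails, and the obvious candidates are the fares $p_A^{ij}$ and $p_B^{ij}$ along a common arc $e_{ij}$, since the demand relation (\ref{con:demand}) is the \emph{only} place where the two players' payoffs interact.

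First I would substitute (\ref{con:demand}) into the objective $\mathcal{F_A}$, so that the sole terms of $u_A$ depending on $p_B^{ij}$ are those carrying $x_A^{ij}=D^{ij}f(p_A^{ij},p_B^{ij})$; collecting them gives $(p_A^{ij}-p_c^{ij}+p_e^{i})\,D^{ij}\,\tfrac12(1-p_A^{ij})(1+p_B^{ij})$. Differentiating once in $p_B^{ij}$ and once in $p_A^{ij}$ yields $\dfrac{\partial^2 u_A}{\partial p_A^{ij}\partial p_B^{ij}}=\dfrac{D^{ij}}{2}\bigl(1-2p_A^{ij}+p_c^{ij}-p_e^{i}\bigr)$, and by the symmetry of the formulation $\dfrac{\partial^2 u_B}{\partial p_A^{ij}\partial p_B^{ij}}=\dfrac{D^{ij}}{2}\bigl(1-2p_B^{ij}+p_c^{ij}-p_e^{i}\bigr)$. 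Their difference is $D^{ij}(p_B^{ij}-p_A^{ij})$, which is nonzero as soon as $D^{ij}>0$ and $p_A^{ij}\neq p_B^{ij}$ --- exactly the asymmetric regime the paper cares about --- so no exact potential can exist. I would also add a one-line contrast with the (piecewise) linear model of \cite{AG2022}: there $f$ is affine in each argument, the coupled term is bilinear in $(p_A^{ij},p_B^{ij})$, both mixed partials collapse to the same constant, and the obstruction vanishes; here it is precisely the extra $p_A^{ij}$ factor multiplying the bilinear interaction that destroys the symmetry.

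The one delicate point --- and the only real obstacle --- is that the players' strategy sets are not plain product boxes: constraints (\ref{con:flow}), (\ref{con:fleet}) and even (\ref{con:supply}) involve the prices through $x_A^{ij}$, so the game is not literally a normal-form game on a product of open sets, and one must check that the mixed-partial test is still the right criterion. I would handle this by localizing to an open region of the joint feasible set where the inequality constraints (\ref{con:supply}) and (\ref{con:nonneg}) are strictly slack (fleets large enough, $r_A^{ij},r_B^{ij}>0$) and all fares lie in $(0,1)$; there a perturbation of $p_A^{ij}$ can be carried along a feasible direction by adjusting $r_A^{ij}$ on the same arc so that the flow-balance equality (\ref{con:flow}) is preserved, and re-doing the computation along these feasible directions returns the \emph{same} difference $D^{ij}(p_B^{ij}-p_A^{ij})$. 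Since this curl-type obstruction must vanish on every such patch for a global $\Phi$ to exist, the claim follows; everything else reduces to the routine derivative calculations sketched above.
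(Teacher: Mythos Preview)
Your argument is correct and rests on the same obstruction as the paper's: the mixed second partials $\partial^2 u_A/\partial p_A^{ij}\partial p_B^{ij}$ and $\partial^2 u_B/\partial p_A^{ij}\partial p_B^{ij}$ fail to agree, so no $\Phi$ with $\partial\Phi/\partial p_A^{ij}=\partial u_A/\partial p_A^{ij}$ and $\partial\Phi/\partial p_B^{ij}=\partial u_B/\partial p_B^{ij}$ can exist. The paper reaches the same contradiction by a hands-on reconstruction---integrating $\partial u_A/\partial p_A^{ij}$ to get $\Phi$ up to an unknown $\Psi(p_B^{ij})$, then differentiating in $p_B^{ij}$ and discovering that $\Psi'(p_B^{ij})$ would have to depend on $p_A^{ij}$---whereas you invoke the Monderer--Shapley cross-partial test directly; these are two packagings of the same integrability condition. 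Your version is tidier and, unlike the paper, actually flags the constraint-coupling issue and handles it by localizing to a slack region, which is a genuine improvement in rigor.
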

\begin{proof}
The proof is by contradiction. Details can be found in the Appendix section of the paper.  
\end{proof}
Now, we have already seen an example where if the structure considered in \cite{AG2022} is not satisfied, we may not have a potential function. Naturally, this leads to the question: Under what conditions on the demand function $f(\cdot)$ can the game admit a potential function?
We take a small step towards answering this question which leads to the following result. 
\begin{theorem}\label{thm1}
When the demand function $f(p_A, p_B)$ is of the form $f(p_A, p_B) = g(p_A) + h(p_B)$, the game does have an exact potential function if and only if $h(p_B)$ is a linear function in $p_B$ (that is, $h(p_B) = Cp_B$ for some $C > 0, C \in \mathbb{R}$). 
\end{theorem}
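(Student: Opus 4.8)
The plan is to reduce the existence of an exact potential to an elementary separability question about $h$. First I would invoke the standard fact that, for a two--player game with payoffs $u_A(s_A,s_B)$ and $u_B(s_A,s_B)$, an exact potential $\Phi$ exists if and only if $u_A-u_B$ is additively separable across the two players, i.e.\ $u_A(s_A,s_B)-u_B(s_A,s_B)=\alpha(s_A)+\beta(s_B)$ for some functions $\alpha,\beta$. The ``if'' direction is immediate: take $\Phi:=u_A-\beta(s_B)$, which also equals $u_B+\alpha(s_A)$, and verify the two unilateral--deviation identities. The ``only if'' direction follows because an exact potential forces $u_A-\Phi$ to be invariant along unilateral deviations of $A$, hence (the feasible slices being path--connected) a function of $s_B$ alone, and symmetrically for $u_B-\Phi$. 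I would note that although the two optimization problems are coupled through \eqref{con:demand}, this characterization involves only the payoff functions, so the coupled constraints cause no difficulty; equivalently one may pose the question on the full product strategy space.

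Next I would substitute \eqref{con:demand} into $\mathcal{F}_A$ and collect the coefficient of $x_A^{ij}$, obtaining $u_A=\sum_{ij}(p_A^{ij}-p_c^{ij}+p_e^i)D^{ij}\big(g(p_A^{ij})+h(p_B^{ij})\big)+\sum_{ij}(p_e^i-p_c^{ij})r_A^{ij}-\sum_i p_e^i m_A^i$, and symmetrically for $u_B$ with the roles of $A$ and $B$ exchanged. Forming $u_A-u_B$, the rebalancing and idle--fleet terms each depend on a single player's variables, and on each arc $(i,j)$ the price contribution expands into a function of $p_A^{ij}$ alone plus a function of $p_B^{ij}$ alone, \emph{except} for the cross term $D^{ij}\big(p_A^{ij}h(p_B^{ij})-p_B^{ij}h(p_A^{ij})\big)$. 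Since distinct arcs carry disjoint price variables, $u_A-u_B$ is additively separable if and only if, on each arc, the map $(a,b)\mapsto a\,h(b)-b\,h(a)$ is additively separable on $[0,1]^2$.

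Finally I would settle this reduced problem in both directions. If $h(p)=Cp$, then $a\,h(b)-b\,h(a)\equiv 0$, so $u_A-u_B$ is separable and $\Phi:=u_A-\beta(s_B)$ is an explicit exact potential. For the converse, writing $a\,h(b)-b\,h(a)=\alpha(a)+\beta(b)$ and evaluating at $b=0$ and at $a=0$ determines $\alpha,\beta$ up to additive constants and forces $a\,h(b)-b\,h(a)=h(0)(a-b)$ for all $a,b\in[0,1]$; dividing by $ab$ shows $\big(h(t)-h(0)\big)/t$ is constant, so $h$ is affine. Absorbing the constant term of $h$ into $g$ (which leaves $f$ unchanged) gives $h(p_B)=Cp_B$, and property \ref{item:P7} together with nondegeneracy of the competitive interaction forces $C>0$. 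The steps I expect to require the most care are the bookkeeping that isolates the single obstructing cross term after the substitution \eqref{con:demand}, and stating the ``$u_A-u_B$ separable'' characterization cleanly despite the coupled demand constraints; I would emphasize that the functional--equation step needs no differentiability of $g$ or $h$, which is why I prefer it to the mixed--partials test $\partial^2_{p_A^{ij}p_B^{ij}}u_A=\partial^2_{p_A^{ij}p_B^{ij}}u_B$.
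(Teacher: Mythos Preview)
Your argument is correct and takes a genuinely different route from the paper. The paper proceeds by direct reconstruction: assuming a potential $\Phi$ exists, it integrates $\partial\Phi/\partial p_A^{ij}$ to recover $\Phi$ up to an unknown $\Psi(p_B^{ij})$, then differentiates in $p_B^{ij}$ and matches against $\partial U_B/\partial p_B^{ij}$, arriving at the functional equation $p_A^{ij}\,h'(p_B^{ij})=h(p_A^{ij})$, which forces $h$ linear through the origin. For the converse it exhibits an explicit potential. Your approach instead invokes the two--player characterization ``exact potential $\iff u_A-u_B$ additively separable,'' computes $u_A-u_B$, and isolates the sole obstruction $a\,h(b)-b\,h(a)$ on each arc; the resulting functional equation is solved without derivatives. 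What your route buys is that no differentiability of $g$ or $h$ is needed and the reduction to a single scalar separability question is very clean; what the paper's route buys is that it produces the explicit potential by integration along the way, and it sidesteps having to justify the separability characterization in the presence of the coupled demand constraint \eqref{con:demand} (a point you correctly flag and handle by working on the full product space). Your observation that one only obtains $h$ affine and must absorb the constant into $g$ is a nice clarification of what the theorem really asserts; in the paper's derivation the equation $p_A\,h'(p_B)=h(p_A)$ forces $h(0)=0$ automatically, which is why that step does not appear there.
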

\begin{proof}
Detailed proof can be found in the Appendix. 
\end{proof}
Note that when the game admits a potential function, we can solve the potential function to obtain the equilibrium as has been done in \cite{AG2022}. 

\textit{Remark :} It is easy to see that the demand function used in \cite{AG2022} is a special case of the functional form in Theorem \ref{thm1} where $g(p_A)$ is linear in $p_A$. Hence, it must admit an exact potential function. 

\subsection{Algorithm for Finding Equilibrium}\label{sec:alg}
In this segment, we propose a heuristic algorithm (Refer Algorithm \ref{alg}) to compute the equilibrium of the game for general demand functions which do not admit potential functions.  In particular, we consider that the demand function is of the non-linear form in Eq. \ref{eq:demandfunction}. Unfortunately, we are unable to provide formal convergence guarantees on the algorithm at this time. However, empirically, we find that the algorithm converges quickly to the equilibrium. 

We introduce some more notation here. Let $\mathcal{F_A}$ and $\mathcal{F_B}$ be the optimization problems of players $A$ and $B$ respectively. Now,  
\begin{defn}
$\mathcal{BR}(\mathcal{F_A} ~|~p_B)$ represents the best response price of player $A$ to the price $p_B$ set by player $B$. Similarly, for player $B$, we denote $\mathcal{BR}(\mathcal{F_B} ~|~p_A)$ as the best response price of player B when player A sets its price $p_A$.
\end{defn}
 Note that finding the best response price given the other player's price just involves solving a convex optimization problem and hence, can be done very efficiently using any standard convex-opt solver. 
\begin{algorithm}\label{alg}
\caption{Iteratively Computing the Equilibrium}\label{alg}
\begin{algorithmic}
\Require initializing price vector $p_B^{(0)}$
\Require tolerance level $\epsilon > 0$
\State $p_A^{(0)} \gets \mathcal{BR}(\mathcal{F_A} ~|~ p_B^{(0)})$
\State $p_B^{(1)} \gets \mathcal{BR}(\mathcal{F_B} ~|~ p_A^{(0)})$
\State $p_A^{(1)} \gets \mathcal{BR}(\mathcal{F_A} ~|~ p_B^{(1)})$
\While{$||p_A^{(k)} - p_A^{(k-1)}||_{\infty} > \epsilon$ or $||p_B^{(k)} - p_B^{(k-1)}||_{\infty} > \epsilon$}
\State $p_B^{(k+1)} \gets \mathcal{BR}(\mathcal{F_B} ~|~ p_A^{(k)})$
\State $p_A^{(k+1)} \gets \mathcal{BR}(\mathcal{F_A} ~|~ p_B^{(k+1)})$
\State $k \gets k+1$
\EndWhile
\end{algorithmic}
\end{algorithm}

\textbf{The high-level idea}: 
We can think of $\mathcal{BR}(OPT_{A} ~|~p_B)$ as a function $\Theta(p_B)$ and similarly $\mathcal{BR}(OPT_{B} ~|~p_A)$ as a function $\zeta(p_A)$. Then from the iteration algorithm, we have $p_A^{(k)} = \Theta(p_B^{(k)})$ and $p_B^{(k)} = \zeta(p_A^{(k-1)})$ which implies $p_A^{(k)} = \Theta \circ \zeta (p_A^{(k-1)})$. 

\textbf{Why is it difficult to show convergence guarantee?}
Proving convergence of the algorithm is equivalent to showing that the function composition $\Theta \circ \zeta$ has a fixed point and by symmetry, $\zeta \circ \Theta$ also has a fixed point. However, that requires us to show that $\zeta(\cdot)$ and $\theta(\cdot)$ are continuous mappings which is difficult in this setting. 
%For general demand functions, it might also be difficult to guarantee the uniqueness of the solution at every iteration because the feasible set may not be convex (because of non-linear equality constraints).

\textbf{Comment on convergence from empirical evaluations}

\begin{enumerate}[leftmargin=*]
\item Even though we can not show theoretical convergence, the algorithm was empirically found to converge quickly to the equilibrium, usually within 5 iterations for $\epsilon = 0.01$. 
\item For a given price vector of the other player (say $B$), the algorithm  will always produce a \textit{unique solution} to the player $A$. This follows directly from Observation \ref{obs:unique}.
    \item The algorithm was empirically found to converge to the same equilibrium, irrespective of the initial points. 
    
\end{enumerate} 

\textit{Remark}: For non-linear functions other than Eq.~\ref{eq:demandfunction}, we can still apply this heuristic algorithm. However, since the optimization problem for a player may not be convex (given the price of the other player), we need to rely on a non-linear optimization problem solver and we can not guarantee uniqueness of the equilibrium.
% \textbf{Why is it difficult to provide convergence guarantees? }\\

% It is important to highlight the main challenges to providing formal convergence guarantees.      

\section{Numerical Experiments}
In this section, we will explore the properties of the equilibrium in a variety of settings using numerical experiments. We will start by describing the simulation setup for the experiments. 
\subsection{Simulation Setup}
We fix the total supply on the network to be $\mathcal{S} = 1000$ for all our experiments. We use a parameter $m$ to determine the total demand $\mathcal{D}$ across the network according to the relation $\mathcal{D} = m \mathcal{S}$. For example, $m = 0.5$ indicates that $\mathcal{D} = 500$ and it is a low-demand regime. $m = 1$ indicates that total demand and supply are exactly matched while $m > 1$ indicates high demand regimes. The distribution of demand across the different arcs is controlled by another parameter $\alpha$ (explained in detail in Section \ref{demand_patterns}). We also have a parameter $\beta$ which determines what fraction of total supply $\mathcal{S}$ is owned by which player. Player $A$ has a fleet of size $\beta \mathcal{S}$ while player $B$ has a fleet of size $(1-\beta)\mathcal{S}$. $\beta = 0.5$ indicates that both players are symmetric while $\beta < 0.5$ indicates that $B$ is the larger player in the market (vice-versa for $\beta > 0.5$). 
\begin{table}[h!tb]
    \centering
    \begin{tabular}{|c|c|c|}
    \hline 
    Parameter & Range & Description \\ \hline
    $S$ & $1000$ & Total Supply on network \\ \hline
    $\mathcal{D}$ & $Depends~on~m$ & Total Demand on network \\ \hline
    $m$ & $0.5 - 2.0$ & Demand multiplier wrt $\mathcal{S}$ \\ \hline 
    $\beta$ & $0.2 - 0.5$ & Fraction of $\mathcal{S}$ owned by $A$ \\ \hline 
    $\alpha$ & $0.0 - 1.0$ & Demand distribution parameter \\ \hline
    \end{tabular}
    \vspace{0.1in}
    \caption{Simulation Parameters}
    \label{tab:param}
    \vspace{-0.4in}
\end{table}

\subsection{Demand Patterns}\label{demand_patterns}
We consider the following demand patterns in our experiments: 
\begin{enumerate}[leftmargin=*]
    \item \textit{Pattern 1} : Here, the demand is equally split between the two nodes. We use control parameter $\alpha$ to choose how much of the demand at each node flows towards node $1$. $\alpha$ is varied in the range $[0.5, 1]$. $\alpha = 0.5$ represents a perfectly balanced network while $\alpha = 1$ represents the scenario where all demand is concentrated on arcs which end in node $1$. Let $\mathcal{D}$ represent the total demand across the network. Then the demand distribution can be represented by matrix $D$ where $D_{ij}$ is the demand on arc $e_{ij}$ :
    \[
    D = \begin{bmatrix}0.5\alpha \mathcal{D} & 0.5(1-\alpha) \mathcal{D}\\0.5\alpha \mathcal{D} & 0.5(1-\alpha) \mathcal{D} \end{bmatrix}
    \]
    The case with $\alpha$ close to $1$ may be similar to the busy downtown area ($node~1$) of a city which attracts most of the traffic in the network. 
    \item \textit{Pattern 2} : This demand pattern is used to model scenarios where demand originates out of a single node increasingly with increase in $\alpha$. At $\alpha = 0.5$, the demand is perfectly balanced, but at $\alpha = 1$, all demand originates out of node $1$ and is equally split between arcs $e_{11}$ and $e_{12}$. The demand matrix $D$ is as follows : 
    \[
    D = \begin{bmatrix}0.5\alpha \mathcal{D} & 0.5\alpha \mathcal{D}\\0.5(1-\alpha) \mathcal{D} & 0.5(1-\alpha) \mathcal{D} \end{bmatrix}
    \]
    One possible example of this type of demand pattern at $\alpha$ close to $1$ could be evening traffic from a busy office area. 
    \item \textit{Pattern 3} : This demand pattern captures two extreme scenarios. When $\alpha = 0$, demand is localized only along the cross-arcs $e_{12}$ and $e_{21}$. When $\alpha = 1$, all demand is split equally between the two self-looping arcs and represents a setting where there is no network effect at all.  The demand matrix $D$ is given by : 
    \[
    D = \begin{bmatrix}0.5\alpha \mathcal{D} & 0.5(1-\alpha) \mathcal{D}\\0.5(1-\alpha) \mathcal{D} & 0.5\alpha \mathcal{D} \end{bmatrix}
    \]
    This could be similar to a scenario where each node represents a professional hub. During the daytime, people travel mostly between the two places for work ($\alpha = 0$), but at night, traffic is localized at the individual nodes ($\alpha = 1$).
    \begin{figure}
    \begin{minipage}{0.23\textwidth}
    \includegraphics[width=0.9\textwidth]{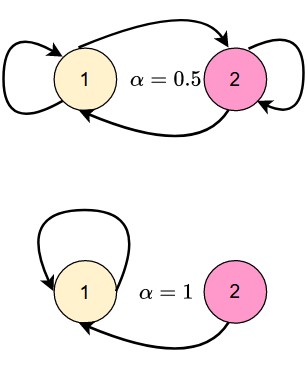}
    %\caption{Demand Pattern 1}
    \vspace{-0.1in}
    \end{minipage}
    \begin{minipage}{0.23\textwidth}
    \includegraphics[width=0.9\textwidth]{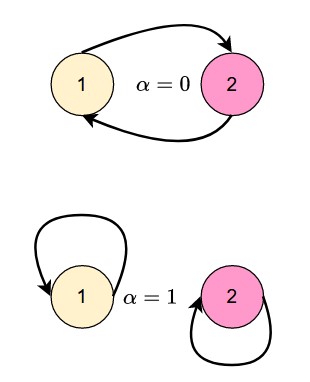}
    %\vspace{-0.1in}
    %\caption{Demand Pattern 3}
    \end{minipage}
    \caption{Demand Patterns 1 (left) and 3 (right)}
    \vspace{-0.1in}
    \end{figure}
\end{enumerate}

\subsection{Duopoly Setting}

\subsubsection{Symmetric Players}
For this setting, we have $\beta = 0.5$, so each player has a fleet size of $500$ autonomous vehicles. We study the effects of the demand multiplier $m$ and the demand distribution parameter $\alpha$ on player profits. 
\begin{itemize}[leftmargin=*]
    \item \textit{Nature of the equilibrium : } When the players are symmetric, we end up with a \textit{symmetric equilibrium} where both players choose identical prices, supply patterns and rebalancing flows. This observation is intuitive and in line with the findings of \cite{AZ2021} and \cite{AG2022}.   
    \item \textit{Effect of $m$ :} As the multiplier $m$ increases, player profits are found to increase across all demand patterns. This can be attributed to higher price points and higher utilization of vehicles as $m$ increases (Refer Fig. \ref{fig:variations}). Intuitively, for small values of $m$ (like $m = 0.5$), the demand is much smaller compared to supply, so the players cannot operate at full capacity which leads to costs either in terms of re-balancing or parking costs. Additionally, the smaller demand forces the players to lower their prices because higher prices will alienate most of the passengers. However, in the high demand regime ($m = 2$), players have the flexibility to set high prices and still capture a sizeable portion of the market. Also, their fleets are operating close to full capacity which leads to higher revenues and lower incurred costs. 
    \begin{figure}[!ht]
      \centering
      \includegraphics[width=0.45\textwidth]{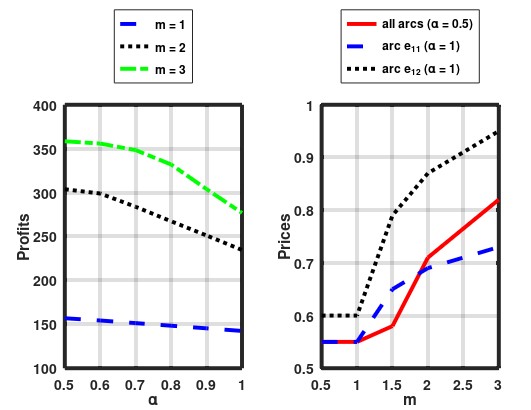}
      \caption{Variations with $m$ and $\alpha$ for demand pattern 2. For the plot on the right, when $\alpha = 0.5$, we report only one price because all edges have same price. When $\alpha = 1$, we report $p_{11}$ and $p_{12}$ because other arcs have zero demand.}
      \label{fig:variations}
      \end{figure}
    \item \textit{Effect of $\alpha$ :} For demand patterns $1$ and $2$, as we increase $\alpha$ from $\frac{1}{2}$ gradually to $1$, the demand across the network becomes \textbf{unbalanced}. This leads to lowering of player profits. This observation is aligned with the findings of \cite{KB2019}. One possible reason for this finding is that as demand gets more unbalanced, the operating inefficiencies for the players increase. For example, let us consider the case with $\alpha = 1$ for pattern $2$. All the demand is restricted to arcs $e_{11}$ and $e_{12}$. So, many vehicles which serve customers along $e_{12}$ are forced to re-route empty back to node $1$ (Refer Fig. \ref{fig3:rebalancing}). Re-balancing incurs costs and reduces the profits. However, in demand pattern $3$, player profits remain unchanged with variation in $\alpha$. This is because there is no need for re-balancing (the demand in the cross-arcs match exactly). So, the player just has to choose the steady state supply at each node optimally. 
    \begin{figure}[!ht]
      \centering
      \includegraphics[width=0.45\textwidth]{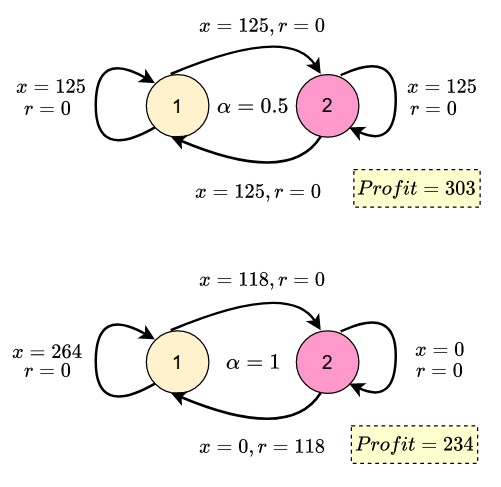}
      \caption{Demand Pattern $1$ under $m = 2$. As $\alpha$ increases from $0.5$ to $1$, demand gets restricted to $e_{11}$ and $e_{12}$. $r_{21}$ increases rapidly leading to decline in profits.}
      \label{fig3:rebalancing}
      \vspace{-0.1in}
\end{figure}
\end{itemize}

\subsubsection{Asymmetric Players}
For this setting, we will look at values of $\beta < 0.5$, so player $A$ has a smaller fleet than player $B$. Note that we do not consider $\beta > 0.5$ because it is identical to the $(1-\beta)$ scenario (where the identities of players are reversed).  
\begin{itemize}[leftmargin=*]
    \item \textit{Big player, Big profits} : We find that as we decrease $\beta$ from $0.5$ (asymmetry increases), player $B$'s profits increase. The increase in profit is much more significant in the higher demand regime ($m = 2$) compared to the low demand regime ($m = 0.5$). This outcome is expected because at high demand, owning a larger fleet provides player $B$ with a large competitive advantage over player $A$. With increase in asymmetry, the market approaches a monopoly market for player $B$. It is well-known that monopoly markets are inefficient, so intuitively it appears that the total size of the market served would decrease with decrease in $\beta$. However, {\em surprisingly}, that does not seem to be the case. Player $B$'s market share definitely increases when it has a larger fleet at its disposal, but the total market served remains unchanged with changes in $\beta$.
    \item \textit{Forced market exits : } Asymmetry creates other adverse effects. In the high demand regime, the player with the smaller fleet might be forced to exit certain markets because serving out of multiple locations may not be profitable with a small fleet. This gives rise to `localized monopolies' where the larger player has unilateral control. This leads to high prices and is not favorable for the passengers. Refer to Fig. \ref{fig4:market_exit} where we identify a scenario where this phenomenon is observed. 
    \begin{figure}[!ht]
      \centering
      \includegraphics[width=0.5\textwidth]{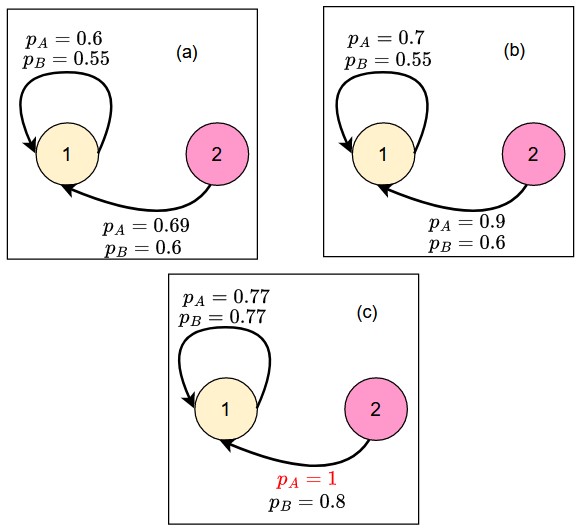}
      \caption{Setting with $\alpha = 1$ and $\beta = 0.2$ for demand pattern 1. (a) $m = 0.5$, (b) $m = 1.0$, (c) $m = 2.0$. Player $A$ forced out on $e_{21}$ for $m = 2.0$. }
      \label{fig4:market_exit}
      \vspace{-0.1in}
    \end{figure}
\end{itemize}

\subsection{Regulations}
In this segment, we will explore how different forms of regulation affect the dynamics of the marketplace. We will specifically look at \textit{two} types of price regulation : 
\begin{enumerate}
    \item \textit{Parking costs at nodes}
    \item \textit{Penalties on re-balancing vehicles}
\end{enumerate}
Also, note that we will study regulations only in the high demand regime ($m = 2$). The purpose of regulations is primarily to prevent strategic behavior from players (like unnecessary re-balancing of empty vehicles) that leads to unfavorable outcomes for passengers or the network as a whole (lesser passengers served, more congestion on the network). That is why regulations generally would not be imposed in the low demand regime (because demand $<$ supply and fleets are already under-utilized, incentive to engage in strategic behavior is small) and there is no motivation to study those scenarios.  
\begin{itemize}[leftmargin=*]
    \item \textit{Regulations affect players' profits disparately : } We find that when players are highly asymmetric (small $\beta$) and demand is unbalanced, regulations affect the larger player's profits much more than the smaller player. Since the regulations are in the form of high parking costs or penalties for re-balancing vehicles, they have negligible effect on the smaller player because it has no re-balancing or parked vehicles (its entire fleet is serving demand). However, regulations do not necessarily lead to passenger welfare because prices increase and number of rides completed, decreases. Refer Tables \ref{tab:noreg} and \ref{tab:reg}. \\
    
    Consider the scenario with demand pattern $2$ and $\alpha = 1$. This means that there is high demand originating in node $1$. The intuition here is that players would try to capture the high demand at node $1$ by re-routing empty vehicles from node $2$ to node $1$ or keep a large supply at node $1$. So, one possible form of regulation could be to increase parking costs at node $1$ and penalize empty vehicles on $e_{21}$. We simulate the scenario with and without regulation. We set $\beta = 0.2$.
    \begin{table}[h!tb]
    \centering
    \begin{tabular}{|c|c|c|c|c|}
    \hline 
    Metrics & $e_{11}$ & $e_{12}$ & $e_{21}$ & $e_{22}$ \\ \hline
    \#Rides completed (A) & 200 & 0 & - & -\\ \hline 
    \#Rides completed (B) & 200 & 200 & - & -\\ \hline 
    Prices (A) & 0.77 & 1.00 & - & - \\ \hline
    Prices (B) & 0.77 & 0.8 & - & - \\ \hline
    \end{tabular}
    \vspace{0.1in}
    \caption{Results without regulations in place, player $A$'s profit = $134$ and player $B$'s profit = $254$. }
    \label{tab:noreg}
    \vspace{-0.2in}
    \end{table}
    
    \begin{table}[h!tb]
    \centering
    \begin{tabular}{|c|c|c|c|c|}
    \hline 
    Metrics & $e_{11}$ & $e_{12}$ & $e_{21}$ & $e_{22}$ \\ \hline
    \#Rides completed (A) & 200 & 0 & - & -\\ \hline 
    \#Rides completed (B) & 200 & 150 & - & -\\ \hline 
    Prices (A) & 0.77 & 1.00 & - & - \\ \hline
    Prices (B) & 0.77 & 0.85 & - & - \\ \hline
    \end{tabular}
    \vspace{0.1in}
    \caption{Results with regulations in place, player $A$'s profit = $134$ and player $B$'s profit = $157$. Prices increase and \# rides completed decreases.}
    \label{tab:reg}
    \vspace{-0.2in}
    \end{table}
    \item \textit{Network effects of regulation :} One consistent pattern that we observe in our numerical experiments is that the larger player does not use all its fleet to serve customers even when the total demand across the network is high. This means that a lot of passengers go unserviced which is clearly not a favorable outcome for social welfare. In this segment, we investigate if regulations can be used appropriately to compel the larger player to serve more demand instead of parking/re-balancing vehicles deliberately. Since we operate in the high demand regime ($m = 2$), price regulations have negligible effect on the smaller player. \\
    
    Consider the following setup of demand pattern $1$ with $\alpha = 0.75$. Set $\beta = 0.2$ which indicates that the players are highly asymmetric and player $B$'s fleet is $4$ times the size of player $A$'s fleet. We re-define $p_c^{ij}$ slightly to include a regulatory component: 
    \begin{equation*}
        p_c^{ij} = p_b^{ij} + v^{ij}
    \end{equation*}
    where $p_b^{ij}$ represents the transit cost on the edge (that applies to all vehicles) while $v^{ij}$ is a penalty that applies only to re-balancing vehicles on $e_{ij}$. For now, we set $p_b^{ij} = 0.1$ for all $e_{ij} \in \mathcal{A}$. When there is no regulation, $v^{ij} = 0 ~\forall~e_{ij}$. We report our solutions in the form of a table. Variables/parameters which are arc-dependent are reported as $2 \times 2$ matrices while node-dependent quantities are reported as vectors of size $2 \times 1$.   \\ 
    \textbf{Scenario 1 (No regulation) :}
    \begin{table}[h!tb]
    \centering
    \begin{tabular}{|c|c|}
    \hline 
    Variables/Parameters & Values \\ \hline
    Parking cost ($p_e$) & $[0; 0]$ \\ \hline
    Penalty for re-balancing ($v$) & $[[0, 0]; [0, 0]]$ \\ \hline
    Supply ($m_B$) & $[495; 305]$ \\ \hline 
    Prices ($p_B$) & $[[0.71, 0.5]; [0.73, 0.55]]$\\ \hline 
    Rides ($x_B$) &  $[[200, 111]; [200, 105]]$\\ \hline
    Rebalancing ($r_B$) & $[[0, 89]; [0, 0]]$\\ \hline
    Idling Vehicles & $[95; 0]$ \\ \hline
    \end{tabular}
    \vspace{0.1in}
    \caption{Scenario 1}
    \label{tab:networkeff1}
    \vspace{-0.2in}
    \end{table}
    Observe that player $B$ has a large number of vehicles ($95$) idling at node $1$ (Table \ref{tab:networkeff1}). Total supply at node $1$ is $495$, out of which $200 + 111 = 311$ serve passengers while $89$ are routed to node $2$ for re-balancing. All the supply at node $2$ is used to serve demand. \textit{The most intuitive regulation here is to impose parking costs at node $1$}. \\ \\
    \textbf{Scenario 2 :}
    \begin{table}[h!tb]
    \centering
    \begin{tabular}{|c|c|}
    \hline 
    Variables/Parameters & Values \\ \hline
    Parking cost ($p_e$) & $[0.5; 0]$ \\ \hline
    Penalty for re-balancing ($v$) & $[[0, 0]; [0, 0]]$ \\ \hline
    Supply ($m_B$) & $[400; 400]$ \\ \hline 
    Prices ($p_B$) & $[[0.71, 0.5]; [0.73, 0.55]]$\\ \hline 
    Rides ($x_B$) &  $[[200, 111]; [200, 105]]$\\ \hline
    Rebalancing ($r_B$) & $[[0, 89]; [0, 0]]$\\ \hline
    Idling Vehicles & $[0; 95]$ \\ \hline
    \end{tabular}
    \vspace{0.1in}
    \caption{Scenario 2}
    \label{tab:networkeff2}
    \vspace{-0.2in}
    \end{table}
    We have now increased the parking cost at node $1$ to $0.5$ (Table \ref{tab:networkeff2}). As we can see, it has no impact on player $B$ whatsoever in terms of prices or passengers served. Player $B$ only adjusts the supply in such a way that the idling vehicles from earlier are now parked at node $2$. This also highlights that \textit{localized regulations may not always produce the desired effect.} \\ \\
    \textbf{Scenario 3 : } We now impose an additional regulation in terms of parking costs at node $2$. This does force player $B$ to use some of the extra supply for serving demand along $e_{22}$ (rides served increases from $105$ to $116$). However, this is not a good outcome because it has created a scenario where the number of vehicles moving around without any passengers has increased significantly. The intuition here is that the RSP finds it more favorable to route the extra supply on the network (because there are no penalties currently) than actually serve demand.  \\
    \begin{table}[h!tb]
    \centering
    \begin{tabular}{|c|c|}
    \hline 
    Variables/Parameters & Values \\ \hline
    Parking cost ($p_e$) & $[0.5; 0.5]$ \\ \hline
    Penalty for re-balancing ($v$) & $[[0, 0]; [0, 0]]$ \\ \hline
    Supply ($m_B$) & $[463; 337]$ \\ \hline 
    Prices ($p_B$) & $[[0.71, 0.5]; [0.73, 0.5]]$\\ \hline 
    Rides ($x_B$) &  $[[200, 111]; [200, 116]]$\\ \hline
    Rebalancing ($r_B$) & $[[43, 109]; [21, 0]]$\\ \hline
    Idling Vehicles & $[0; 0]$ \\ \hline
    \end{tabular}
    \vspace{0.1in}
    \caption{Scenario 3}
    \label{tab:networkeff3}
    \vspace{-0.3in}
    \end{table}
\end{itemize}
The key takeaways from this discussion are as follows : 
\begin{enumerate}[leftmargin=*]
    \item Regulations do not affect all players in the same way. They affect the larger player in the market significantly more than the smaller player. 
    \item Localized regulations often may not have the desired outcome. This is primarily due to network effects. Since the RSP has control over its whole supply, it can circumvent local price regulations by diverting supplies in the most cost-effective way possible. 
    \item This also gives us some insights into how we can design effective regulations for ride-sharing marketplaces with autonomous vehicles. The regulations need to be coordinated over the entire network to achieve the best possible outcome.  
\end{enumerate}

\section{Conclusion \& Future Scope}
In this paper, we use game theory to study the networked competition of two players in the ride-sharing marketplace where the players have fully autonomous fleets. We propose a non-linear demand function that captures the effect of one player's price on the other. Unlike linear demand functions in literature, we show that the non-linear form does not admit a potential function, so it is technically challenging to solve. We propose an iterative algorithm to compute the equilibrium of the game for any general form of the demand function. We use our model to empirically study properties of the equilibrium under a variety of settings like asymmetric competition and price regulations and develop insights that can help regulators design informed policies/regulations for these markets that can achieve desired outcomes like increased passenger welfare. \\

There are several interesting avenues of future work. One immediate extension is to investigate the equilibrium of this game under incomplete information settings. It may also be worthwhile to explore if it is possible to provide formal convergence guarantees or find necessary conditions for the convergence of our iteration algorithm. In our work, we highlight why it is important to coordinate regulations across the network. Designing effective price regulations that take network effects into account, could be another interesting direction of work.

%% The acknowledgments section is defined using the "acks" environment
%% (and NOT an unnumbered section). This ensures the proper
%% identification of the section in the article metadata, and the
%% consistent spelling of the heading.
\section{Acknowledgement}
We thank Dr. Parinaz Naghizadeh at the Ohio State University for her insightful inputs at different stages of the work. 

%%
%% The next two lines define the bibliography style to be used, and
%% the bibliography file.
\bibliographystyle{IEEEtran}
\bibliography{mybib.bib}

%%
%% If your work has an appendix, this is the place to put it.
\appendix

\section{Appendix}
\subsection{Analysis of Observation 2}
Consider player $A$'s optimization problem in \ref{eq:opt_problem}. Since $x_A^{ij}$ are functions of $p_A^{ij}$, we will substitute $x_A^{ij}$'s throughout the problem. Additionally, define $C_{ij} = \frac{1}{2}D^{ij}(1+p_B^{ij})$. Since we are solving player $A$'s optimization problem given the prices of player $B$, we treat $C_{ij}$ as constants. Now, we will convert this constrained optimization problem into an unconstrained one by constructing the Lagrangian ($\mathcal{L}$). Let us define the Lagrangian multipliers associated with some of the constraints we need : 
\begin{itemize}
    \item $K_i$ for (2b)
    \item $L_{ij}$ and $H_{ij}$ for the lower and upper bounds in (2e)
    \item $Q_{ij}$ for (2f)
\end{itemize}
Therefore, we have the following first order KKT condition :  
\begin{equation}\label{eq:p}
    \begin{split}
        &\frac{\partial \mathcal{L}}{\partial p_A^{11}} = C_{11}\left(-1 + 2p_A^{11} - p_c^{11} + p_e^{1} - K_1\right) - L_{11} + H_{11} = 0 \\
        \implies & -1 + 2p_A^{11} - p_c^{11} + p_e^{1} - K_1 = 0
    \end{split}
\end{equation}
Note that we have already argued about why $L_{11}$ and $H_{11}$ can be zero (from complementary slackness). Also, $C_{ij} \neq 0$. We can obtain a similar condition involving $r_A^{11}$ : 
\begin{equation}\label{eq:r}
    \begin{split}
        &\frac{\partial \mathcal{L}}{\partial r_A^{11}} = p_c^{11} - p_e^{1} + K_1 - Q_{11} = 0
    \end{split}
\end{equation}
Combining Eqs. \ref{eq:p} and \ref{eq:r}, we get  $p_A^{11} = \frac{1}{2}(1 + Q_{11})$. We can derive similarly for all other $p_A^{ij}$'s.

\subsection{Proof of Observation 3}
\begin{proof}
We will prove this by contradiction. Suppose, there exists a potential function $\Phi$ for this game. Let $U_A(\cdot)$ and $U_B(\cdot)$ represent the objectives of players $A$ and $B$ respectively. Then the following must hold : 
\begin{equation*}
    \begin{split}
        &\frac{\partial \Phi}{\partial p_A^{ij}} = \frac{\partial U_A}{\partial p_A^{ij}} = \frac{1}{2}D^{ij}(1+p_B^{ij})\left(1-2p_A^{ij}+p_c^{ij}-p_e^{i}\right) \\
        &\frac{\partial \Phi}{\partial p_B^{ij}} = \frac{\partial U_B}{\partial p_B^{ij}} = \frac{1}{2}D^{ij}(1+p_A^{ij})\left(1-2p_B^{ij}+p_c^{ij}-p_e^{i}\right) \\
        &\frac{\partial \Phi}{\partial r_A^{ij}} = \frac{\partial U_A}{\partial r_A^{ij}} = p_e^{i} - p_c^{ij} = \frac{\partial U_B}{\partial r_B^{ij}} = \frac{\partial \Phi}{\partial r_B^{ij}}\\
        &\frac{\partial \Phi}{\partial m_A^{i}} = \frac{\partial U_A}{\partial m_A^{i}} = - p_e^{i} = \frac{\partial U_B}{\partial m_B^{i}} = \frac{\partial \Phi}{\partial m_B^{i}}
    \end{split}
\end{equation*}
It is easy to see that $r_A^{ij}$, $r_B^{ij}$, $m_A^{i}$ and $m_B^{i}$ will appear in $\Phi(\cdot)$ as linear terms. Now, we will try to reconstruct the rest of $\Phi(\cdot)$ from the conditions above. 
\begin{equation*}
    \begin{split}
        &\frac{\partial \Phi}{\partial p_A^{ij}} =  \frac{1}{2}D^{ij}(1+p_B^{ij})\left(1-2p_A^{ij}+p_c^{ij}-p_e^{i}\right) \\
        &\implies \Phi = \sum_{i,j} \frac{1}{2}D^{ij}(1+p_B^{ij}) \int_{p_A^{ij}}(1-2p_A^{ij}+p_c^{ij}-p_e^i)dp_A^{ij} + \sum_{ij}\Psi(p_B^{ij}) \\& \quad - \sum_{i} p_e^{i} \left(m_A^i + m_B^i\right) + \sum_{i,j} (p_e^i - p_c^{ij})\left(r_A^{ij} + r_B^{ij}\right)
    \end{split}
\end{equation*}
Note that $\Psi(\cdot)$ is purely a function of $p_B^{ij}$, otherwise it would have contributed terms to $\frac{\partial \Phi}{\partial p_A^{ij}}$. This also gives us : 
\begin{equation*}
    \begin{split}
        &\frac{\partial \Phi}{\partial p_B^{ij}} = \frac{1}{2}D^{ij}\left(p_A^{ij}(1 + p_c^{ij}-p_e^i) - (p_A^{ij})^2\right) + \Psi^{'}(p_B^{ij})
    \end{split}
\end{equation*}
But, we know that $\frac{\partial \Phi}{\partial p_B^{ij}} = \frac{1}{2}D^{ij}(1+p_A^{ij})\left(1-2p_B^{ij}+p_c^{ij}-p_e^{i}\right)$. Therefore, we must have : 
\begin{equation*}
    \begin{split}
    \frac{1}{2}D^{ij}(1+p_A^{ij})&\left(1-2p_B^{ij}+p_c^{ij}-p_e^{i}\right) \\&= \frac{1}{2}D^{ij}\left(p_A^{ij}(1 + p_c^{ij}-p_e^i) - (p_A^{ij})^2\right) + \Psi^{'}(p_B^{ij})
    \end{split}
\end{equation*}
This gives us an expression for $\Psi^{'}(p_B^{ij})$. But it turns out to be a function of both $p_A^{ij}$ and $p_B^{ij}$ which is clearly a contradiction. This concludes the proof. 
\end{proof}

\subsection{Proof of Theorem 3.1}
\begin{proof}
In this proof, we will recycle the same techniques we used in the last proof. Since this is an if and only if statement, we need to prove both directions. \\
($\impliedby$) We know that the demand function $f(p_A, p_B)$ is of the form $g(p_A) + h(p_B)$. Suppose, there exists an exact potential function admitted by this demand function. Let's call it $\Phi$. Then, the following must hold :
\begin{equation*}
    \begin{split}
        &\frac{\partial \Phi}{\partial p_A^{ij}} = D^{ij}\left( g(p_A^{ij}) + h(p_B^{ij}) + g^{'}(p_A^{ij})(p_A^{ij}-p_c^{ij} + p_e^{i})\right)\\
        &\frac{\partial \Phi}{\partial p_B^{ij}} = D^{ij}\left( g(p_B^{ij}) + h(p_A^{ij}) + g^{'}(p_B^{ij})(p_B^{ij}-p_c^{ij} + p_e^{i})\right)
    \end{split}
\end{equation*}
The other partial derivatives $\frac{\partial \Phi}{\partial r_A^{ij}}$, $\frac{\partial \Phi}{\partial r_B^{ij}}$, $\frac{\partial \Phi}{\partial m_A^{i}}$ and $\frac{\partial \Phi}{\partial m_B^{i}}$ remain the same as in the earlier proof and we will reuse them directly. By reconstruction from the partial wrt $p_A^{ij}$, $\Phi$ should be of the following form : 
\begin{equation*}
    \begin{split}
        \Phi &= \sum_{i,j} D^{ij}p_A^{ij}\left(g(p_A^{ij}) + h(p_B^{ij}) + p_e^i - p_c^{ij}\right) + \sum_{i,j} \Psi(p_B^{ij})\\ &\quad- \sum_{i} p_e^{i} \left(m_A^i + m_B^i\right) + \sum_{i,j} (p_e^i - p_c^{ij})\left(r_A^{ij} + r_B^{ij}\right)
    \end{split}
\end{equation*}
This implies, $\frac{\partial \Phi}{\partial p_B^{ij}} = \Psi^{'}(p_B^{ij}) + D^{ij}p_A^{ij} h^{'}(p_B^{ij})$ which should be equal to $D^{ij}\left( g(p_B^{ij}) + h(p_A^{ij}) + g^{'}(p_B^{ij})(p_B^{ij}-p_c^{ij} + p_e^{i})\right)$. Now, since $\Psi^{'}(p_B^{ij})$ is only a function of $p_B^{ij}$, the following must be true : 
\begin{equation*}
    \begin{split}
    &p_A^{ij} h^{'}(p_B^{ij}) = h(p_A^{ij}) \\
    \implies & \frac{h(p_A^{ij})}{p_A^{ij}} = h^{'}(p_B^{ij})
    \end{split}
\end{equation*}
Therefore, $h(p)$ must be the $zero$ function or $h(p) = Cp$. Since $h(p)$ represents the effect of the competitor's price on a player's demand, $h(p) = 0$ is unrealistic. Therefore, $h(p) = Cp$. Also, note from the properties of the demand function $f(p, q)$ that $\frac{\partial f}{\partial q} \geq 0$. This implies $C > 0$ (we have already ruled out the $zero$ case). \\ \\
($\implies$) For this direction, we have to show that if there is a demand function $f(p_A, p_B) = g(p_A) + h(p_B)$ with $h(p_B) = Cp_B$ for some $C > 0$, then it admits a potential function. So, if we can come up with a valid potential function, we are done. We claim that the following is a potential function for the game for the above demand function : 
\begin{equation*}
    \begin{split}
        \Phi &= \sum_{i,j} D^{ij} \left[p_A^{ij}g(p_A^{ij}) + p_B^{ij}g(p_B^{ij}) + (p_e^{i} - p_c^{ij})\left(g(p_A^{ij}) + g(p_B^{ij})\right) \right] \\ & + C\sum_{i,j}D^{ij} p_A^{ij}p_B^{ij}  - \sum_{i} p_e^{i} \left(m_A^i + m_B^i\right) + \sum_{i,j} (p_e^i - p_c^{ij})\left(r_A^{ij} + r_B^{ij}\right)
    \end{split}
\end{equation*}
It can be easily verified that the above is a potential function for this game. This concludes the proof. 
\end{proof}

\end{document}